\newcommand{\beq}{\begin{equation}}
\newcommand{\eeq}{\end{equation}}
\definecolor{Pr}{rgb}{0.4,0.3,0.9}
\definecolor{JM}{RGB}{4,116,149}%sea blue
\newtheorem{obs}{Observation}
\begin{document}
\title{Exact gate decompositions for photonic quantum computing}
\author{Timjan Kalajdzievski}
\author{Juan Miguel Arrazola}

\affiliation{Xanadu, 372 Richmond St W, Toronto, M5V 2L7, Canada}

\begin{abstract}
We propose a method for decomposing continuous-variable operations into a universal gate set, without the use of any approximations. We fully characterize a set of transformations admitting exact decompositions and describe a process for obtaining them systematically. Gates admitting these decompositions can be synthesized exactly, using circuits that are several orders of magnitude smaller than those achievable with previous methods. Our method relies on strategically using unitary conjugation and a lemma to the Baker-Campbell-Hausdorff formula to derive new exact decompositions from previously known ones, leading to exact decompositions for a large class of gates. We demonstrate the wide applicability of these exact gate decompositions by identifying several quantum algorithms and simulations of bosonic systems that can be implemented with higher precision and shorter circuit depths using our techniques. 
\end{abstract}

 \maketitle

\section{Introduction}

A quantum algorithm is usually specified by a sequence of high-level unitary transformations \cite{ChrisAlgo2017, kalajdzievski2018continuous, CVHHL, Loock2013, kitaev1997quantum, SolovayReview, amy2013meet}. A physical quantum computer, on the other hand, is only capable of performing a small set of elementary gates. The challenge of programming a quantum computer is to find combinations of these elementary gates that can reproduce a desired algorithm. It is known that specific sets of logical gates exist such that any arbitrary unitary operation can be expressed as a finite product of gates from the set, to any desired precision \cite{lloyd1995almost, divincenzo1995two, barenco1995elementary, SethUniversal, SethCV, bravyi2005universal}. Given their ability to reproduce any desired transformation, these are referred to as universal gates sets. Programming a quantum computer to perform a desired algorithm thus requires a method to decompose high-level unitaries in terms of universal gate sets. Ideally, a decomposition method will reproduce the algorithm with high precision while requiring as few gates as possible. \\

%On a qubit quantum computer, circuits include logic gates which act on specific qubits. It has been shown how a product of qubit operators from a universal set can be combined into to a more desirable, but more complex operation in general algorithms \cite{Sanders2013, Endo2018, Berry2014}.
Gate decompositions in the qubit model of quantum computing have been well studied. For example, the Solovay-Kitaev theorem \cite{kitaev1997quantum, SolovayReview} states that if a set of qubit gates generates a dense subset of $SU(2)$, then it can approximate any $SU(2)$ unitary using a number of gates that is logarithmic in the precision. This means that any single-qubit operation can be approximated to high precision using short circuits. These results have been strengthened to even more efficient decompositions for single-qubit operations \cite{kliuchnikov2013asymptotically, kliuchnikov2014asymptotically, kliuchnikov2015framework, bocharov2015efficient} and general multi-qubit operations \cite{amy2013meet}.\\ %Currently there is no equivalently powerful theorem for other methods of quantum computing. Practical and low cost decomposition algorithms are still very sought after outside of the qubit model.

In the continuous-variable (CV) model of quantum computing, registers are infinite-dimensional quantum systems -- namely quantum harmonic oscillators -- and the logic gates are unitaries acting on the infinite-dimensional Hilbert space \cite{SethCV, Menicucci2006, Gu2009, ChrisOverview, ChrisAlgo2017, kalajdzievski2018continuous, CVHHL, Loock2013}. This presents unique challenges for the task of decomposing arbitrary operations in CV photonic quantum computers, where comparatively less progress has been made thus far. Ref.~\cite{SethCV} introduced the notion of universality in CV quantum systems based on the commutator algebra of quadrature operators. Following this, Ref.~\cite{CommutatorApprox} presented the first systematic approach for decomposing arbitrary CV transformations, while Refs.~\cite{ChrisAlgo2017, kalajdzievski2018continuous, CVHHL, Loock2013} deal with decompositions for specific tasks. All these methods are approximate in the sense that the resulting sequence of gates from the universal set only implements the desired unitary up to a certain error, which can be decreased arbitrarily by employing longer circuits \cite{Suzuki2005, CommutatorApprox, Wiebe2010}. However, this can lead to very large circuit depths even if a modest precision is desired. \\
    
   In this work, we introduce a method for decomposing a wide class of quantum gates without the use of any approximations. Exact decompositions are known for a few specific cases \cite{ChrisAlgo2017, CommutatorApprox, CVHHL}, but it is not well understood what transformations allow exact decompositions nor how they can be derived. We remedy this by characterizing a set of gates admitting exact decompositions and by describing a process for obtaining them systematically. This leads to circuits that are several orders of magnitude shorter than for previously known techniques, with no errors due to approximations. For example, decomposing the gate $e^{i\hat{X}^4}$ with the commutator approximation method requires approximately $1.8 \times 10^4$ gates with precision $10^{-3}$, while using our method only 29 gates are needed and the gate is decomposed exactly. We demonstrate the applicability of these operations by compiling a table of CV algorithms and simulations of bosonic systems for which exact decompositions can be employed. \\%As well, a fully precise decomposition can lead to a better understanding of how the action of the gates of a universal set can combine to perform the unitary corresponding to any algorithm.

   The remainder of this work is structured as follows. We begin by introducing the basic concepts and commonly used tools in CV gate decomposition. %, as well as a description of the commutator approximation in Section~\ref{Background}.
We then outline our exact decomposition method and discuss how it can be used in Sec.~\ref{ExactMethods}. Finally, in Sec.~\ref{Comparing}, we compare our method to previous techniques and examine some specific examples where our method may be applied. Sec.~\ref{Discussion} offers a brief discussion and some insights into open questions.

\section{GATE DECOMPOSITIONS} \label{Background}
   In the CV model of quantum computing, each register is a quantum harmonic oscillator with corresponding creation and annihilation operators $\hat{a}_j$ and $\hat{a}^{\dagger}_j$, where the subscript refers to the mode they act upon. For definiteness, we henceforth assume that these registers are modes of the quantized electromagnetic field. The annihilation and creation operators satisfy the bosonic commutation relations $[\hat{a}_j , \hat{a}^{\dagger}_j]=1$, and $[\hat{a}_j, \hat{a}_k]=[\hat{a}^{\dagger}_j, \hat{a}^{\dagger}_k]=0$ for $j\neq k$. An equivalent operator description of a bosonic system uses the quadrature field operators $\hat{X}$ and $\hat{P}$, which are related to the annihilation and creation operators as
\begin{align}
\hat{X}_{j} = \frac{1}{2}\left(\hat{a}^{\dagger}_j+ \hat{a}_j\right), \\
\hat{P}_{j} = \frac{i}{2}\left(\hat{a}^{\dagger}_j - \hat{a}_j\right), 
\end{align} 
with commutator $[\hat{X}_j,\hat{P}_j] = \frac{i}{2}$.\\

   A universal gate set is a collection of gates such that any arbitrary unitary operation can be expressed as a finite series of gates from the universal set, to any chosen approximation. We focus on the universal set specified by the gates 
\beq
\{e^{i\frac{\pi}{2}(\hat{X}_j^2+\hat{P}_j^2)},\, e^{it_1\hat{X}_j},\, e^{i t_2 \hat{X}_j^2},\, e^{it_3 \hat{X}_j^3},\, e^{i\tau \hat{X}_j\hat{X}_k }\},
\eeq
where $t_{1}$, $t_{2}$, $t_{3}$, and $\tau$ are real parameters. This particular universal set is chosen for mathematical convenience in our method. The gate $e^{i\tau \hat{X}_1\hat{X}_2 }$ allows for decompositions of multiple modes, while the Fourier transform gate $\hat{F}=e^{i\frac{\pi}{2}(\hat{X}^2+\hat{P}^2)}$ has the effect of mapping between the quadrature operators:
\begin{align}\label{Fourier1}
\hat{F}^{\dagger}\hat{X}\hat{F}&=-\hat{P},\\ \label{Fourier2}
\hat{F}^{\dagger}\hat{P}\hat{F}&=\hat{X}.
\end{align}
%An example of an alternative universal set is one where the cubic phase gate $e^{it_3 \hat{X}^3}$ is replaced by the Kerr gate $e^{it_3\left(\hat{X}^2 + \hat{P}^2\right)^2}$. Note that it is possible that there exist other universal sets for which our techniques in the next section may also be applied

For convenience, we express an arbitrary unitary as $U = e^{it\hat{H}}$ with $\hat{H}=\sum_{j=1}^N \hat{H}_{j}$ a Hermitian operator. When decomposing gates into a universal set, it is often necessary to express this sum of operators in the exponent as a product of exponential operators. More specifically, for $\hat{H} = \hat{A} + \hat{B}$ where $\hat{A}$ and $\hat{B}$ are Hermitian operators, the Zassenhaus formula \cite{magnus1954exponential} states that
\beq \label{splitting}
e^{it(\hat{A}+\hat{B})} = e^{it\hat{A}}e^{it\hat{B}}e^{\frac{t^2}{2}[\hat{A},\hat{B}]}e^{\frac{-it^3}{6}\left(2[\hat{B},[\hat{A},\hat{B}]]+[\hat{A},[\hat{A},\hat{B}]]\right)}\cdots
\eeq
In the trivial case where $[\hat{A},\hat{B}]=0$ the product ends immediately after the first two operations. In general, however, it is possible that this product never terminates, resulting in a decomposition that is no longer finite. In this case, it is possible to truncate the product at a designated stage in the expansion and neglect the proceeding commutators. This strategy is referred to as a Trotter-Suzuki approximation \cite{Suzuki1976}, which can be stated in the general case as
\beq\label{Eq: trotter-suzuki}
e^{it\hat{H}}=\prod_{j=1}^N\left(e^{i\frac{t}{K}\hat{H}_j}\right)^K+O(t^2/K),
\eeq 
where $\hat{H} = \sum_{j=1}^N \hat{H}_{j}$. This approximation requires $K=O(1/\varepsilon)$ gates to achieve precision $\varepsilon$ for fixed $t$. 

In general, the unitaries of the form $e^{it\hat{H}_j}$ are not part of the universal set, so the task remains to decompose them. One way to achieve this is via the commutator approximation method detailed in Ref.~\cite{CommutatorApprox}. This technique expresses sums and products of the quadrature operators in terms of commutators and then approximates the exponentials of these commutators as repeated products of their arguments. More specifically, given two Hermitian operators $\hat{A}$ and $\hat{B}$, it holds that \cite{childs2017toward}
\begin{align}\label{Eq: comm_approx}
e^{t^2[\hat{A},\hat{B}]}&= \left(e^{i\frac{t}{K}\hat{B}}e^{i\frac{t}{K}\hat{A}}e^{-i\frac{t}{K}\hat{B}}e^{-i\frac{t}{K}\hat{A}}\right)^{K^2}+O(t^4/K).
\end{align}
For fixed $t$, $K=O(1/\varepsilon)$ gates are required to achieve an error of $\varepsilon$ in the approximation, but the resulting circuit will have a depth of $O(1/\varepsilon^2)$. This means that very large circuits are required for even a modest precision. %A similar expression is also used when we have a nested commutator in the form $e^{t^3[\hat{B},[\hat{B},\hat{B}]]}$. 
To illustrate the use of the commutator approximation technique, consider an example where we wish to decompose the operator $e^{it\left(\hat{X}^2\hat{P} + \hat{P}\hat{X}^2\right)}$. First, using the equality $\hat{X}^2\hat{P} + \hat{P}\hat{X}^2=\frac{2}{3}[\hat{X}^3,\hat{P}^2]$ from Ref.~\cite{CommutatorApprox}, we have
\beq
e^{it\left(\hat{X}^2\hat{P} + \hat{P}\hat{X}^2\right)} = e^{\frac{2t}{3}[\hat{X}^3,\hat{P}^2]}.
\eeq
Using Eq.~(\ref{Eq: comm_approx}) with $\hat{A}=\hat{X}^3$ and $\hat{B}=\hat{P}^2$ leads to
\begin{align} \label{commexample}
e^{\frac{2t}{3}[\hat{X}^3,\hat{P}^2]} = & \left(e^{i\frac{\sqrt{\frac{2t}{3}}}{K}\hat{P}^2}e^{i\frac{\sqrt{\frac{2t}{3}}}{K}\hat{X}^3}e^{-i\frac{\sqrt{\frac{2t}{3}}}{K}\hat{P}^2}e^{-i\frac{\sqrt{\frac{2t}{3}}}{K}\hat{X}^3}\right)^{K^2} + O\left[\left(\frac{2t}{3}\right)^2/K\right].
\end{align}
Each of the gates on the right-hand side are contained within the universal set up to Fourier transforms, but in order to obtain a precision of $O(1/K)$, the product must be repeated $O(K^2)$ times. For instance, for $t=1$, if the goal is to impose a precision of $10^{-3}$, the product of four gates needs to be repeated approximately $10^5$ times.\\

%A circuit depth of $\sim 4\times10^5$ is significantly large, and a reduction of the repetitions of the gates in Eq.~(\ref{commexample}) would only lead to a reduction of the precision of the decomposition. 
In fact, Ref.~\cite{sparrow2018simulating} examines the experimental error of implementing a sequence of gates on a qubit quantum computer. The results show that as the number of gates is increased, the accumulated physical implementation error eventually supersedes the precision gain from the repetitions. Thus, at some point, more repetitions do not lead to lower errors. This problem remains on a CV quantum computer and further study is required to determine the optimal trade-off between physical error in implementation and precision error in the decomposition. However, if it is possible to find an exact decomposition, then there is no longer any need for this trade-off since the decomposition is fully precise.\\

In the literature on CV decompositions there are specific examples where the commutator approximation and even sometimes Trotter-Suzuki can be bypassed \cite{ChrisAlgo2017, CommutatorApprox, CVHHL}. These cases are desirable, but no general framework has been proposed to characterize the set of gates admitting exact decompositions. In the following section, we detail such a general method for performing exact decompositions. %The exact decompositions that result from our method can be significantly easier to implement on a practical quantum computer due to a reduction of circuit depth, as well as possibly eliminating the need to consider the trade off between circuit depth and precision. 

\section{METHOD FOR EXACT DECOMPOSITIONS} \label{ExactMethods}

We describe a method to decompose multi-mode gates $e^{it\hat{H}}$, where the operator $\hat{H}$ is of the form
\begin{align}\label{Eq: class_of_H}
\hat{H}= \left(\prod_{j=1}^{N-1} \hat{X}_{j}\right)\hat{X}_N^n
\end{align}
for $n$ a positive integer, as well as single-mode gates $e^{it\hat{H}}$ with
\beq
\hat{H}=\hat{X}^{N}.
\eeq
The label of the modes in Eq.~\eqref{Eq: class_of_H} is arbitrary: the method works for any product where at most one operator has an exponent $n>1$. In both cases we require that $N$ is divisible by either 2 or 3, and in the multi-mode case, the product $nN$ must also be divisible by 2 or 3. These gates can be extended to include momentum quadrature operators $\hat{P}_j$ by Fourier transforms acting on individual modes. As we discuss later in the paper, this set of gates for which exact decompositions can be obtained encompasses a large class of operators arising in several CV quantum algorithms and simulations of bosonic systems.\\ 

The method relies on strategically employing: (i) unitary conjugation
\begin{equation}\label{Eq:Unit_conj}
Ue^{it \hat{H}} U^{\dagger}= e^{it U\hat{H}U^\dagger},
\end{equation}
(ii) a lemma to the Baker-Campbell-Hausdorff (BCH) formula
\begin{align}\label{Eq:BCH}
e^A B e^{-A} = B + [A,B] +& \frac{1}{2!}[A,[A,B]] + \cdots,
\end{align}
and (iii), the identity
\begin{align} \label{Unbalanced}
e^{i3\alpha^2 t \hat{P}_{k}\hat{X}_{j}^2 } = \: &e^{i2\alpha \hat{X}_{j} \hat{X}_{k}} e^{it\hat{P}_{k}^3}e^{-i\alpha \hat{X}_{j} \hat{X}_{k}}e^{-it\hat{P}_{k}^3}e^{-i2\alpha \hat{X}_{j} \hat{X}_{k}} e^{it\hat{P}_{k}^3}e^{i\alpha \hat{X}_{j} \hat{X}_{k}}e^{-it\hat{P}_{k}^3} e^{i\alpha^3 t \frac{3}{4}\hat{X}_{j}^3},
\end{align}
with $\alpha$ and $t$ real parameters. Before outlining the method in detail, we study simple examples to illustrate the main idea behind our approach.\\

Suppose that the goal is to derive an exact decomposition for the unitary $e^{i\alpha\hat{X}_{j}\hat{X}_{k}\hat{X}_{l}}$. The first step of the method is to express the operator $\hat{X}_{j}\hat{X}_{k}\hat{X}_{l}$ as a linear combination of polynomials of degree three in the quadrature operators $\hat{X}_{j}$, $\hat{X}_{k}$, and $\hat{X}_{l}$. Namely, we employ the identity
\begin{align}\label{ExThreeMode}
\hat{X}_{j}\hat{X}_{k}\hat{X}_{l}=&\tfrac{1}{6}[(\hat{X}_{j} + \hat{X}_{k} + \hat{X}_{l})^3-(\hat{X}_{j} + \hat{X}_{k})^3-(\hat{X}_{j} + \hat{X}_{l})^3-(\hat{X}_{k} + \hat{X}_{l})^3+\hat{X}_{j}^3+\hat{X}_{k}^3+\hat{X}_{l}^3],
\end{align}
which implies the identity
\begin{align} \label{DecompCube2}
e^{i\alpha\hat{X}_{j}\hat{X}_{k}\hat{X}_{l}} = \: &e^{\frac{i\alpha}{6}\left(\hat{X}_{j} + \hat{X}_{k} + \hat{X}_{l}\right)^3} e^{\frac{-i\alpha}{6}\left(\hat{X}_{j} + \hat{X}_{k}\right)^3}e^{\frac{-i\alpha}{6}\left(\hat{X}_{j} + \hat{X}_{l}\right)^3}e^{\frac{-i\alpha}{6}\left(\hat{X}_{k} + \hat{X}_{l}\right)^3}
e^{\frac{i\alpha}{6}\hat{X}^{3}_{j}} e^{\frac{i\alpha}{6}\hat{X}^{3}_{k}} e^{\frac{i\alpha}{6}\hat{X}^{3}_{l}},
\end{align}
since all the terms in the exponent commute. The right-hand side of this equation includes gates of the form $e^{\frac{i}{6}\hat{X}^{3}}$ that are part of the universal set, but it is still necessary to decompose the remaining terms. To do this, we employ the decompositions
\begin{align}\label{ThreeModePoly1}
e^{i\alpha\left(\hat{X}_{j} + \hat{X}_{k}\right)^3} &= e^{2i\hat{P}_{j}\hat{X}_{k}} e^{i\alpha\hat{X}_{j}^3} e^{-2i\hat{P}_{j}\hat{X}_{k}}\\
\label{ThreeModePoly2} e^{i\alpha\left(\hat{X}_{j} + \hat{X}_{k}+ \hat{X}_{l}\right)^3} &= e^{2i\hat{P}_{j}\hat{X}_{l}} e^{i\alpha(\hat{X}_{j}+\hat{X}_k)^3} e^{-2i\hat{P}_{j}\hat{X}_{l}},
\end{align}
which can be derived from Eqs.~\eqref{Eq:Unit_conj} and \eqref{Eq:BCH} using $U=e^{2i\hat{P}_{j}\hat{X}_{k}}$ as the unitary of conjugation. In summary, we have derived an exact decomposition by expressing $\hat{X}_{j}\hat{X}_{k}\hat{X}_{l}$ as a linear combination of polynomials of operators, allowing us to write the target gate $e^{i\alpha\hat{X}_{j}\hat{X}_{k}\hat{X}_{l}}$ in terms of a product of gates, each of which can be exactly decomposed. \\

Now suppose that the goal is to derive an exact decomposition for the higher-order single-mode gate $e^{i\alpha\hat{X}_j^4}$. Following our previous strategy, we seek to express the operator $\hat{X}_j^4$ as a linear combination of degree-four polynomials. It holds that 
\beq
\hat{X}_j^4 = (\hat{X}_{j}^2 + \hat{X}_{k})^2-\hat{X}_k^2-2\hat{X}_j^2\hat{X}_k,
\eeq
which leads to the identity
\beq \label{fourthorder}
e^{i\alpha\hat{X}_j^4}=e^{i\alpha(\hat{X}_{j}^2 + \hat{X}_{k})^2}e^{-i\alpha\hat{X}_k^2}e^{-2i\alpha\hat{X}_j^2\hat{X}_k}.
\eeq 
Here, the gate $e^{-i\alpha\hat{X}_k^2}$ is part of the universal set, while Eq.~\eqref{Unbalanced} gives an exact decomposition for $e^{-i\alpha\hat{X}_j^2\hat{X}_k}$ up to a Fourier transform. As before, the remaining term can be decomposed using unitary conjugation:
\begin{equation}\label{Eq:19}
e^{i\alpha(\hat{X}_{j}^2 + \hat{X}_{k})^2}=e^{2i\hat{P}_{k}\hat{X}_{j}^2} e^{i\alpha\hat{X}_{k}^2} e^{-2i\hat{P}_{k}\hat{X}_{j}^2},
\end{equation}
leading to a full decomposition for the target gate $e^{i \alpha \hat{X}_j^4}$. Note that an additional ancillary mode $k$ was required in this decomposition. To extend this method to a more general setting, we employ the same basic strategy: express the target gate in terms of a linear combination of polynomials and decompose the resulting gates in terms of unitary conjugation or previously derived decompositions.

\subsection{Single-mode gates}
We describe the method for decomposing single-mode gates of the form $e^{i\alpha\hat{X}^N}$ with $N$ an integer divisible by 2 or 3. In the previous example, we showed how Eq.~\eqref{Unbalanced} could be employed to decompose $e^{i\alpha\hat{X}^4}$. Generalizing Eq.~\eqref{Unbalanced} to higher order similarly enables decompositions of single-mode gates with larger exponents. It can be shown that such a general form exists, given by the expression
\begin{align}\label{twomode}
e^{2i\alpha^2 \hat{P}_{k}\hat{X}_{j}^{N}} = \: &e^{2i\alpha \hat{X}_{j}^{N-2} \hat{X}_{k}} e^{-i\alpha \hat{X}_{j}^2 \hat{P}_{k}^2} e^{-2i\alpha \hat{X}_{j}^{N-2} \hat{X}_{k}} e^{i\alpha \hat{X}_{j}^2 \hat{P}_{k}^2} e^{i\alpha^3 \hat{X}_{j}^{2(N-1)}},
\end{align}
for $N\geq2$. The proof of this formula can be found in the Appendix. This formula holds with the addition of another mode and can be proven in a similar manner.
\begin{align}\label{threemodeADD}
e^{2i\alpha^2 \hat{P}_{k} \hat{P}_{l}\hat{X}_{j}^n} = \: e^{2i\alpha \hat{X}_{j}^{n-2} \hat{X}_{k} \hat{X}_{l}} e^{-i\alpha \hat{X}_{j}^2 \hat{P}_{k}^2} e^{-2i\alpha \hat{X}_{j}^{n-2} \hat{X}_{k} \hat{X}_{l}} e^{i\alpha \hat{X}_{j}^2 \hat{P}_{k}^2} e^{-2i\alpha^3 \hat{X}_{j}^{2(n-1)} \hat{P}_{l}}.
\end{align}
These decompositions require the gate $e^{i\alpha \hat{X}^{2}_{j}\hat{X}^{2}_{k}}$, which is not part of the universal set. However, an exact decomposition also holds for this gate (see the Appendix for a proof):
\begin{align} \label{twosquares}
e^{i\alpha \hat{X}^{2}_{j}\hat{X}^{2}_{k}} = \: &e^{i2\hat{P}_{j}\hat{X}_{k}}e^{i\frac{\alpha}{12}\hat{X}_{j}^{4}}e^{-i4\hat{P}_{j}\hat{X}_{k}}e^{i\frac{\alpha}{12}\hat{X}_{j}^{4}}e^{i2\hat{P}_{j}\hat{X}_{k}} e^{-i\frac{\alpha}{6}\hat{X}_{j}^{4}}e^{-i\frac{\alpha}{6}\hat{X}_{k}^{4}},
\end{align}
where we can employ the previously derived decomposition for $e^{i\hat{X}_{j}^{4}}$. To obtain a general form for single-mode decompositions, we use Eq.~(\ref{twosquares}) as well as the fourth-order single-mode decomposition in Eq.~(\ref{fourthorder}) to first obtain a higher-order version of Eq.~(\ref{Unbalanced}):
\begin{align}
e^{2i\alpha^2 \hat{P}_{k}\hat{X}_{j}^3 } = \: & e^{2i\alpha \hat{X}_{j} \hat{X}_{k}} e^{-i\alpha \hat{X}_{j}^2 \hat{P}_{k}^2} e^{-2i\alpha \hat{X}_{j} \hat{X}_{k}} e^{i\alpha \hat{X}_{j}^2 \hat{P}_{k}^2} e^{-2i\alpha^3 \hat{X}_{j}^4}.
\end{align}
This can then be used to create a decomposition for $e^{i\hat{X}_j^6}$ in a similar way to the decomposition of the gate $e^{i\hat{X}_j^4}$. The decomposition for $e^{i\hat{X}_j^6}$ can once more be combined with Eq.~(\ref{twosquares}) to derive an exact decomposition for the next highest power of the two-mode gate, namely $e^{2i\alpha^2 \hat{P}_{k}\hat{X}_{j}^4}$. This process can be continued until the general recursive form in Eq.~(\ref{twomode}) is reached, as well as a more general decomposition of single-mode operations:
\begin{equation} \label{SingleGen}
e^{i\alpha\hat{X}_{k}^N} = e^{2i\hat{P}_{j}\hat{X}_{k}^{N/2}} e^{i\alpha\hat{X}_{j}^2} e^{-2i\hat{P}_{j}\hat{X}_{k}^{N/2}} e^{-i\alpha\hat{X}_{j}^2}  e^{-2i\alpha\hat{X}_{j}\hat{X}_{k}^{N/2}},
\end{equation}
that holds when $N$ is even. The proof of this equation is detailed in the Appendix, but follows similar steps to the fourth-order single-mode gate in Eq.~(\ref{fourthorder}). If $N$ is odd and a multiple of three, exact decompositions can also be derived by noting the following relation:
\begin{align}
 2\hat{X}_{k}^N=&2\left(\hat{X}_{j}+\hat{X}_{k}^{N/3}\right)^3 -3\left(\hat{X}_{l}+\hat{X}_{j}^{2} + \hat{X}_{k}^{N/3}\right)^2- 2\hat{X}_{j}^{3}+3\hat{X}_{j}^{4} + 3\hat{X}_{k}^{2N/3} - 6\hat{X}_{j}\hat{X}_{k}^{2N/3} + 6\hat{X}_{j}^2\hat{X}_{l} + 6\hat{X}_{k}^{N/3}\hat{X}_{l} + 3\hat{X}_{l}^{2}.
\end{align}
Therefore, for $N$ odd and divisible by 3, we can decompose the single-mode operation as
\begin{align} \label{OddSingle}
e^{i2\alpha\hat{X}_{k}^N} =  e^{i2\alpha\left(\hat{X}_{j}+\hat{X}_{k}^{N/3}\right)^3} e^{-i3\alpha\left(\hat{X}_{l}+\hat{X}_{j}^{2} + \hat{X}_{k}^{N/3}\right)^2} e^{-i2\alpha\hat{X}_{j}^{3}}e^{i3\alpha\hat{X}_{j}^{4}}e^{i3\alpha\hat{X}_{k}^{2N/3}} e^{-i6\alpha\hat{X}_{j}\hat{X}_{k}^{2N/3}}e^{i6\alpha\hat{X}_{j}^2\hat{X}_{l}} e^{i6\alpha\hat{X}_{k}^{N/3}\hat{X}_{l}} e^{i3\alpha\hat{X}_{l}^{2}}.
\end{align}
Here, the gates $e^{i2\alpha\left(\hat{X}_{j}+\hat{X}_{k}^{N/3}\right)^3}$ and $e^{-i3\alpha\left(\hat{X}_{l}+\hat{X}_{j}^{2} + \hat{X}_{k}^{N/3}\right)^2}$ can be decomposed using the expressions
\begin{align}
e^{i2\alpha\left(\hat{X}_{j}+\hat{X}_{k}^{N/3}\right)^3} &= e^{2i\hat{P}_{j}\hat{X}_{k}^{N/3}} e^{i2\alpha\hat{X}_{j}^3} e^{-2i\hat{P}_{j}\hat{X}_{k}^{N/3}}\label{Eq:27}, \\
e^{-i3\alpha\left(\hat{X}_{l}+\hat{X}_{j}^{2} + \hat{X}_{k}^{N/3}\right)^2} &= e^{2i\hat{P}_{l}\hat{X}_{k}^{N/3}} e^{2i\hat{P}_{l}\hat{X}_{j}^{2}} e^{-i3\alpha\hat{X}_{l}^2} e^{-2i\hat{P}_{l}\hat{X}_{j}^{2}}e^{-2i\hat{P}_{l}\hat{X}_{k}^{N/3}}\label{Eq:28},
\end{align}
which as before are obtained using unitary conjugation. The other gates in Eq.~(\ref{OddSingle}) can be decomposed with the previous general formulas Eq.~(\ref{SingleGen}) and Eq.~(\ref{twomode}).

\subsection{Multi-mode gates}
 
We study the case where $\hat{H}$ is given by
\beq \label{Hamiltonian}
\hat{H}=\prod_{j=1}^N \hat{X}_j^{n_j},
\eeq
where the $n_j$ are positive integers. We discuss later why restrictions are necessary on the exponents $n_j$, leading to exact decompositions for operators as in Eq.~\eqref{Eq: class_of_H}.\\

As discussed previously, the first step to decompose a multi-mode gate $e^{it \hat{H}}$ is to express $\hat{H}$ as a linear combination of operators. Let $[N]^k$ be the set of all $k$-subsets of $\{1,2,\ldots, N\}$, i.e., all subsets containing $k$ elements. For example, $[3]^2=\{\{1,2\}, \{1,3\}, \{2,3\}\}$. The goal is to find coefficients $c_1, c_2, \ldots, c_{N}$ such that \cite{kan2008moments}
\begin{align} \label{Eq: GeneralD}
\prod_{j=1}^N \hat{X}_j^{n_j}= \sum_{k=1}^{N}c_k\sum_{S\in[N]^k}\left(\sum_{i=1}^k\hat{X}_{S_i}^{n_{S_i}}\right)^N,
\end{align}
where $S\in[N]^k=\{S_1, S_2, \ldots, S_k\}$. When expanded, the term on the right-hand side contains several monomials of the position operators, including the desired term $\prod_{j=1}^N \hat{X}_j^{n_j}$. Each monomial is multiplied by a factor that is a linear combination of the coefficients $c_k$, and the goal is to set these factors to zero for all monomials except $\prod_{j=1}^N \hat{X}_j^{n_j}$. As shown in the Appendix, this gives rise to a linear system of equations for the coefficients $c_{k}$ such that Eq.~\eqref{Eq: GeneralD} holds whenever the coefficients $\vec{c}=(c_N,c_{N-1},\ldots, c_1)$ satisfy the linear system $A\vec{c}=0$. The matrix $A$ is independent of the exponents $n_j$ and is given by
\beq\label{Eq:PascalA}
A=\begin{pmatrix}
    1 & 1 & 0 & 0 & \dots  & 0 \\
    1 & 2 & 1 & 0 &\dots  & 0 \\
    1 & 3 & 3 & 1 &\dots  & 0 \\
    \vdots & \vdots & \vdots & \vdots & \ddots & \vdots \\
    \binom{N-1}{0} & \binom{N-1}{1} & \binom{N-1}{2} & \binom{N-1}{3} & \dots  & \binom{N-1}{N-1}
\end{pmatrix},
\eeq
i.e., the coefficients of $A$ follow the structure of Pascal's triangle.  Note that this linear system is underdetermined since there are $N-1$ equations for $N$ variables. However, by fixing $c_N$, it is possible to find a simple specific solution, as shown in the following observation.

\begin{obs}
A solution to the linear system $A\vec{c}=0$ with $\vec{c}=(c_N,c_{N-1},\ldots, c_1)$ and $A$ as in Eq.~\eqref{Eq:PascalA} is given by $c_{N-k}=(-1)^k c_N$. 
\end{obs}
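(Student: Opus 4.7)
The plan is to verify the ansatz $c_{N-k}=(-1)^k c_N$ by direct substitution into the linear system $A\vec{c}=0$, using the binomial theorem. The only real work is bookkeeping the indexing, so I would start by writing the $i$-th equation of the system explicitly. Reading off the matrix, row $i$ (for $i=1,\ldots,N-1$) has nonzero entries $A_{i,j}=\binom{i}{j-1}$ for $j=1,\ldots,i+1$, and the vector $\vec{c}=(c_N,c_{N-1},\ldots,c_1)$ has its $j$-th component equal to $c_{N-j+1}$. Therefore the $i$-th equation reads
$$\sum_{j=1}^{i+1}\binom{i}{j-1}\,c_{N-j+1}=0.$$

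Next, I would plug in the proposed solution. Writing the index as $N-j+1=N-(j-1)$, the ansatz gives $c_{N-j+1}=(-1)^{j-1}c_N$. Pulling $c_N$ out of the sum and reindexing with $m=j-1$, the $i$-th equation becomes
$$c_N\sum_{m=0}^{i}\binom{i}{m}(-1)^m = c_N(1-1)^i,$$
where the second equality is the binomial theorem. Since $i\geq 1$ for every row, this vanishes identically, establishing that $A\vec{c}=0$.

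The main (and essentially only) obstacle is aligning the reversed indexing of $\vec{c}$ with the natural left-to-right reading of Pascal's triangle, so that the alternating-sign pattern matches the signs produced by $(1-1)^i$. Once that translation is set up, the proof reduces to a one-line application of a standard binomial identity, and there is nothing further to verify; in particular, no restriction on $c_N$ is needed, consistent with the underdeterminedness of the system noted before the observation.
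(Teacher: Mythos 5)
Your proof is correct. The indexing is set up properly: row $i$ of $A$ has nonzero entries $A_{i,j}=\binom{i}{j-1}$, the $j$-th component of $\vec{c}=(c_N,\ldots,c_1)$ is $c_{N-j+1}$, and substituting the ansatz $c_{N-j+1}=(-1)^{j-1}c_N$ collapses each row equation to $c_N\sum_{m=0}^{i}\binom{i}{m}(-1)^m=c_N(1-1)^i=0$. The paper takes a different route: it argues by induction on $N$, assuming the ansatz solves the system with the last row and column of $A$ deleted, and then verifying only the new last-row equation for the remaining unknown $c_1$ via the same alternating binomial identity $\sum_{\ell}\binom{K-1}{K-\ell}(-1)^{\ell}=(-1+1)^{K-1}=0$. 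Your direct verification is the more economical argument: it dispenses with the induction scaffolding (including the implicit step that the solution of the smaller system remains consistent with the earlier rows of the enlarged one, since the rows and the ansatz shift together) and makes transparent that every row, not just the last, vanishes for the same one-line reason. Both proofs ultimately rest on the same binomial-theorem identity, and both correctly leave $c_N$ as a free parameter, consistent with the homogeneity and underdeterminedness of the system noted in the text.
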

\begin{proof}
For simplicity and without loss of generality, let $c_{N}=1$. The base case for $N=2$ is trivially true; it is simply $c_{2}+c_{1}=0 \implies c_1=-1$. Now examine the general structure for the case with $N=k$. Assume that the claimed solution $c_{N-k}=(-1)^k$  with $k=0,1,\ldots, N-2$ is true for $N=K-1$, i.e., the system when the last row and last column are omitted from the matrix $A$. For the case $N=K$, the last row of $A$ determines an equation for the remaining coefficient $c_1$. We then have
\begin{align}
&\sum_{k=0}^{K-1}\binom{K-1}{k}c_k =0\\
=&\sum_{\ell=0}^{K-1}\binom{K-1}{K-\ell}c_{K-\ell}\nonumber\\
=&c_1+\sum_{\ell=0}^{K-2}\binom{K-1}{K-\ell}(-1)^\ell=0.
\end{align}
We want to show that $c_1=(-1)^{K-1}$ is a solution to this equation. This yields
\begin{align}
(-1)^{K-1}+\sum_{\ell=0}^{K-2}\binom{K-1}{K-\ell}(-1)^\ell&=\sum_{\ell=0}^{K-1}\binom{K-1}{K-\ell}(-1)^\ell\nonumber\\
&=(-1+1)^{K-1}=0
\end{align}
as desired, where the last line follows from the binomial theorem.
\end{proof}
The solution $c_{N-k}=(-1)^k c_N$ is valid for any value of $c_N$. In order to satisfy Eq.~\eqref{Eq: GeneralD} exactly, we simply fix $c_N=1/N!$. With this choice of coefficients $c_k$, the sum of polynomials on the right-hand side of Eq.~(\ref{Eq: GeneralD}) is exactly equal to the multi-mode product of operators on the left-hand side. Thus, the process for decomposing multi-mode gates is to find an exact decomposition for each polynomial appearing on the right-hand side of Eq.~(\ref{Eq: GeneralD}). As done before, specifically in Eqs.~\eqref{ThreeModePoly1}, \eqref{ThreeModePoly2}, \eqref{Eq:19}, \eqref{Eq:27}, and \eqref{Eq:28}, decomposition of polynomials is performed using unitary conjugation with the gate $e^{2i\hat{P}_{1}\hat{X}_{j}^{n_j}}$  -- with decomposition in Eq.~(\ref{twomode}) -- and the lemma to the BCH formula. More precisely, we employ the following identity to decompose an arbitrary polynomial:
\begin{equation}\label{ArbPoly}
e^{it(\hat{X}_{1} + \hat{X}_{2}^{n_2} +  \hat{X}_{3}^{n_3} + \cdots + \hat{X}_{m}^{n_m})^N} = e^{2i\hat{P}_{1}\hat{X}_{m}^{n_m}}\cdots e^{2i\hat{P}_{1}\hat{X}_{3}^{n_3}}e^{2i\hat{P}_{1}\hat{X}_{2}^{n_2}} e^{it\hat{X}_{1}^N}e^{-2i\hat{P}_{1}\hat{X}_{2}^{n_2}}e^{-2i\hat{P}_{1}\hat{X}_{3}^{n_3}}\cdots e^{-2i\hat{P}_{1}\hat{X}_{m}^{n_m}}.
\end{equation}
Using Eq.~\eqref{ArbPoly}, it is not possible to find exact decompositions for all operators $\prod_{j=1}^N \hat{X}_j^{n_j}$ in Eq.~\eqref{Hamiltonian}, as there are restrictions on the exponents $n_j$. The restrictions are as follows:

\begin{enumerate}
\item There can exist at most one $j$ such that $n_j\neq1$. This restriction arises from the fact that the central operator in Eq.~\eqref{ArbPoly}, namely  $\hat{X}_1$, must have an exponent equal to one. Therefore, in order to use Eq.~\eqref{ArbPoly} to decompose every polynomial $\left(\sum_{i=1}^k\hat{X}_{S_i}^{n_{S_i}}\right)^N$ on the right-hand side of Eq.~(\ref{Eq: GeneralD}), all $k$-subsets $S$ with $k>1$ must contain at least one element $S_i\in S$ such that $n_{S_i}=1$. This is only possible if there exists at most one $j$ such that $n_j\neq1$.

%.  For example, the polynomial $(\hat{X}_{j}^2 + \hat{X}_{k}^3)^4$ cannot be created directly through unitary conjugation, but both $(\hat{X}_{j} + \hat{X}_{k}^3)^4$ and $(\hat{X}_{j}^2 + \hat{X}_{k})^4$ are created as follows:
%\begin{align}
%e^{it(\hat{X}_{j} + \hat{X}_{k}^3)^4}& = e^{2i\hat{P}_{j}\hat{X}_{k}^3} e^{i\hat{X}_{j}^4} e^{-2i\hat{P}_{j}\hat{X}_{k}^3}  \\
%e^{it(\hat{X}_{j}^2 + \hat{X}_{k})^4}& = e^{2i\hat{P}_{k}\hat{X}_{j}^2} e^{i\hat{X}_{k}^4} e^{-2i\hat{P}_{k}\hat{X}_{j}^2}.
%\end{align}
%Note that the central operator on the right hand side of these equations is always held to the overall power $N=4$. This constraint can also be seen in Eq.~(\ref{ArbPoly}) where $\hat{X}_1$ was kept to unit power in the brackets on the left-hand side in order to be used as the central operator in the applications of unitary conjugation.

\item The product $N n_j$ must be divisible by either 2 or 3 for all $j$. This arises because the $k=1$ terms in Eq.~(\ref{Eq: GeneralD}) produce monomials that include only single-mode operators to the power of $N n_j$. As shown in the previous section, the method only produces exact decompositions for single-mode operations with power divisible by 2 or 3.
\end{enumerate}
%Finally we note that this process to construct multi-mode exponential operators is essentially the same when one or more of the exponent operators are raised to some power. Say for example you have the operator $e^{it\hat{X}_{j}\hat{X}_{k}^2 \hat{X}_{l}^5}$, you still form the set of polynomials from Eq.~(\ref{Eq: GeneralD}) but instead use Eq.~(\ref{twomode}) in the unitary conjugation to keep the kth and lth modes to the powers of two and five respectively. The resulting decomposition will include higher-order single-mode operations but will have the same coefficients $c_{y}$ and general form as before.

To summarize, we employ Eq.~\eqref{Eq: GeneralD} to express a multi-mode operator as a linear combination of polynomials. Each polynomial can then be exactly decomposed using Eq.~\eqref{ArbPoly} and single-mode decompositions from the previous section. This yields a method for constructing exact decompositions of operators of the form $e^{it\hat{H}}$, for $\hat{H}= \left(\prod_{j=1}^{N-1} \hat{X}_{j}\right)\hat{X}_N^n$, with both $Nn$ and $N$ divisible by either 2 or 3.

\section{APPLICATIONS AND IMPROVEMENTS} \label{Comparing}

In this section we demonstrate the power of exact decompositions by comparing our method to the standard commutator approximation in terms of circuit depth and precision. We also provide a table of CV algorithms and simulations of bosonic systems which contain operations that are covered by our method. The following table examines the gate counts for decompositions of some common operations. The gate counts neglect any Fourier transforms used by either method as they are inexpensive to implement experimentally.

\begin{center}
\begin{tabular}{ ||l|l|l|| } 
 \hline
 Target gate & Commutator approx. & Exact decomposition \\ 
 & ($10^{-3}$ precision) & \\
\hline \hline
 $e^{it\hat{X}^4}$ & $1.8\times 10^4$ gates & 29 gates \\ 
\hline
 $e^{it\hat{X}_{j}^2\hat{X}_{k}^2}$ & $2.8\times 10^4$ gates & 119 gates \\ 
 \hline
 $e^{it\hat{X}_{j}\hat{X}_{k}^3}$ & $2.9 \times 10^8$ gates & 125 gates \\ 
 \hline
$e^{it\hat{X}_{j}\hat{X}_{k}\hat{X}_{l}}$ & $4.2\times 10^8$ gates & 17 gates \\
\hline
$e^{it\hat{X}_{j}^{2}\hat{X}_{k}\hat{X}_{l}}$ & $1.4\times 10^9$ gates & 281 gates \\
\hline
$e^{it\hat{X}_{j}\hat{X}_{k}\hat{X}_{l}\hat{X}_{m}}$ & $6.9\times 10^{13}$ gates & 440 gates \\
\hline
\end{tabular}
\end{center}

The two gates with the lowest gate counts in the exact method are the third-order three-mode gate and the fourth-order single-mode gate, with 17 and 29 gates in their respective decompositions. This is in contrast to the commutator approximation where the addition of the third mode greatly increases the circuit depth. The structure of each method seems to indicate that the exact decompositions scale better under addition of more modes. Also, the need to repeat the set of gates to improve precision in the commutator approximation produces several orders-of-magnitudes increase in the resulting circuit depths. \\

To illustrate the applicability of these decompositions in quantum algorithms, the table below compiles a list of cases where the algorithm requires a decomposition for gates covered by our method. In some cases, only part of a desired operation might be decomposed exactly, but as demonstrated above, the exact decompositions even for these portions can produce a significant decrease in circuit depth.

\begin{center}
\begin{tabular}{ ||l|l|l|| } 
 \hline
 Algorithm & Hamiltonian and & Circuit depth \\ 
 & Operators covered by method & of operator \\
\hline \hline
 Vibrational dynamics of molecules, & $\hat{H} = \hbar \sum_{i\leq j}\frac{x_{ij}}{2}\sqrt{\omega_{i}\omega_{j}} \left(\hat{a}_{i}^{\dagger}\hat{a}_{i} + \hat{a}_{j}^{\dagger}\hat{a}_{j} + 2\hat{a}_{i}^{\dagger}\hat{a}_{j}^{\dagger}\hat{a}_{i}\hat{a}_{j}\right)$  & 119 gates \\ 
Ref.~\cite{sparrow2018simulating} & $e^{it\hat{H}}$ contains elements of universal set and operator $e^{it\hat{X}_{j}^2\hat{X}_{k}^2}$ & \\
\hline
 Non-homogeneous linear & $\hat{H} = \sum_{j=1}^N \left(a_j \hat{X}_j + b_j \hat{P}_j +\alpha_j\hat{X}_j^2+\beta_j\hat{P}_j^2\right)\hat{X}_k\hat{X}_l$  & 17 gates and \\ 
 partial differential equations, Ref.~\cite{CVHHL} & $e^{it\hat{H}}$ contains operators $e^{it\hat{X}_{j}\hat{X}_{k}\hat{X}_l}$ and $e^{it\hat{X}_{j}^2\hat{X}_{k}\hat{X}_l}$ & 281 gates \\
\hline
 Dipole interaction term of  & $\hat{H} = V\sum_{i\leq j}\hat{a}_i^{\dagger}\hat{a}_i\hat{a}_j^{\dagger}\hat{a}_j$  & 119 gates \\ 
Bose Hubbard model, Ref.~\cite{kalajdzievski2018continuous} & $e^{it\hat{H}}$ contains the operator $e^{it\hat{X}_{i}^2\hat{X}_{j}^2}$ & \\
\hline
 One particle tunneling term of  & $\hat{H} = -T\sum_{i\leq j}\hat{a}_i^{\dagger}\left(\hat{n}_i + \hat{n}_j\right)\hat{a}_j$  & 125 gates \\ 
Bose Hubbard model, Ref.~\cite{BHExtended} & $e^{it\hat{H}}$ contains the operator $e^{it\hat{X}_{i}\hat{X}_{j}^3}$ & \\
\hline
 Nearest-neighbor tunneling term of  & $\hat{H} = \frac{P}{2}\sum_{i\leq j}\hat{a}_i^{\dagger}\hat{a}_i^{\dagger}\hat{a}_j\hat{a}_j$  & 119 gates \\ 
Bose Hubbard model, Ref.~\cite{BHExtended} & $e^{it\hat{H}}$ contains the operator $e^{it\hat{X}_{i}^2\hat{X}_{j}^2}$ & \\
\hline
 Cross-Kerr Hamiltonian,  & $\hat{H} = \left(\hat{X}_i^2 + \hat{P}_i^2\right)\otimes \left(\hat{X}_j^2 + \hat{P}_j^2\right)$  & 119 gates \\ 
Ref.~\cite{CommutatorApprox} & $e^{it\hat{H}}$ contains the operator $e^{it\hat{X}_{i}^2\hat{X}_{j}^2}$ & \\
\hline
 Principal component analysis,  & $R(\hat{P}_R) = e^{i\delta\hat{P}_R(\hat{a}_1\hat{a}_2^{\dagger} + \hat{a}_1^{\dagger}\hat{a}_2)}$  & 17 gates \\ 
Ref.~\cite{ChrisAlgo2017} & $R(\hat{P}_R)$ contains the operator $e^{i\delta\hat{X}_{R}\hat{X}_{1}\hat{X}_2}$ & \\
\hline
 Matrix inversion algorithm,  & $R(\hat{P}_R\hat{P}_S) = e^{i\gamma\hat{P}_R\hat{P}_S(\hat{a}_1\hat{a}_2^{\dagger} + \hat{a}_1^{\dagger}\hat{a}_2)}$  & 440 gates \\ 
Ref.~\cite{ChrisAlgo2017} & $R(\hat{P}_R\hat{P}_S)$ contains the operator $e^{i\gamma\hat{X}_{R}\hat{X}_S\hat{X}_{1}\hat{X}_2}$ & \\
\hline
 Monte Carlo integration,  & $e^{ih(\hat{X}_1)\hat{P}_2\hat{P}_3\hat{P}_{\phi}}$, where $h(\hat{X}_1)$ is a polynomial in $\hat{X}_1$ & Depends on $h(\hat{X}_1)$ \\ 
Ref.~\cite{MonteCarlo} & can be decomposed by our method for any $h(\hat{X}_1)$ & \\
\hline
\end{tabular}
\end{center}

The final entry in the table contains a general operation that depends on the choice of $h(\hat{X}_1)$, which is chosen to be a polynomial in $\hat{X}_1$. This operation will be covered by the exact decomposition method regardless of the choice of $h(\hat{X}_1)$ because there are four total modes. Assuming $h(\hat{X}_1) = \hat{X}_1^{n}$, then $Nn=4n$ is always even and therefore the single mode operation $e^{it\hat{X}_1^{4n}}$ can be decomposed exactly. Also, since the final three modes are all to unit power, any one of them may be used as the exponent of the central operator in unitary conjugation as detailed in the previous section. Therefore both of the restrictions of the method have been met regardless of $n$. By linearity, the same holds for a general polynomial $h(\hat{X}_1) = \sum_{n}a_n\hat{X}_1^{n}$.

\section{CONCLUSION} \label{Discussion}
We presented a method for producing exact decompositions of continuous-variable operations into a product of gates from a universal set. In essence, the method works by expressing target Hamiltonians as a linear combination of polynomials, then finding exact decompositions of these polynomials using unitary conjugation in combination with the lemma to Baker-Campbell-Hausdorff. The unitaries covered by this method cover a large set of operations arising in photonic quantum algorithms and the simulation of bosonic system. Compared to previous techniques such as the standard commutator approximation, our method can yield reductions in gate count of several orders of magnitude, with the added advantage that the target unitaries are decomposed exactly. \\

Despite its wide applicability, our method does not produce exact decompositions for all possible bosonic gates. Notably, Hamiltonians that contain products of both $\hat{X}$ and $\hat{P}$ quadrature operators -- for instance operators of the form $\hat{H} = \hat{X}^n\hat{P}^m+\hat{P}^m\hat{X}^n$ -- are not covered by the method. Additionally, if the operator to be decomposed contains a sum of terms that do not commute, the Trotter-Suzuki approximation in Eq.~(\ref{Eq: trotter-suzuki}) still needs to be used to split the terms. An outstanding open question resulting from our work is to fully characterize the set of operations that can be decomposed exactly.\\

\section*{ACKNOWLEDGMENTS}
   We would like to thank Peter van Loock, Nicol\'as Quesada, Nathan Killoran, Sasho Kalajdzievski, and Christian Weedbrook for helpful discussions.

\onecolumngrid
\appendix
\section{Appendix}
\subsection{Proof of Eq.~\eqref{twomode}}
Here we prove the identity
\begin{align}\label{twomodeApp}
e^{2i\alpha^2 \hat{P}_{k}\hat{X}_{j}^{N}} = \: & e^{2i\alpha \hat{X}_{j}^{N-2} \hat{X}_{k}} e^{-i\alpha \hat{X}_{j}^2 \hat{P}_{k}^2} e^{-2i\alpha \hat{X}_{j}^{N-2} \hat{X}_{k}} e^{i\alpha \hat{X}_{j}^2 \hat{P}_{k}^2} e^{i\alpha^3 \hat{X}_{j}^{2(N-1)}}.
\end{align}

The middle three operators on the right-hand side can be expanded with unitary conjugation as
\begin{equation}
e^{-i\alpha \hat{X}_{j}^2 \hat{P}_{k}^2} e^{-2i\alpha \hat{X}_{j}^{N-2} \hat{X}_{k}} e^{i\alpha \hat{X}_{j}^2 \hat{P}_{k}^2} = e^{-2i\alpha \left(e^{-i\alpha \hat{X}_{j}^2 \hat{P}_{k}^2}\hat{X}_{j}^{N-2}e^{i\alpha \hat{X}_{j}^2 \hat{P}_{k}^2} \right)\left( e^{-i\alpha \hat{X}_{j}^2 \hat{P}_{k}^2}\hat{X}_{k}e^{i\alpha \hat{X}_{j}^2 \hat{P}_{k}^2}\right) }.
\end{equation}
Then using the lemma to BCH,  the two factors in the exponent can be simplified to get
\begin{equation}
e^{-i\alpha \hat{X}_{j}^2 \hat{P}_{k}^2}\hat{X}_{j}^{N-2}e^{i\alpha \hat{X}_{j}^2 \hat{P}_{k}^2} = \hat{X}_{j}^{N-2},
\end{equation}
and
\begin{equation}
e^{-i\alpha \hat{X}_{j}^2 \hat{P}_{k}^2}\hat{X}_{k}e^{i\alpha \hat{X}_{j}^2 \hat{P}_{k}^2} = \hat{X}_{k} - \alpha \hat{X}_j^2\hat{P}_k.
\end{equation}
The resulting two terms in the exponent are then separated using the Zassenhaus formula of Eq.~(\ref{splitting}):
\begin{align}
 e^{-2i\alpha \hat{X}_{j}^{N-2} \left(\hat{X}_{k}-\alpha \hat{X}_j^2\hat{P}_k\right)}& =  e^{-2i\alpha \hat{X}_{j}^{N-2} \hat{X}_{k}}e^{2i\alpha^2 \hat{X}_j^N\hat{P}_k}e^{-\frac{1}{2}[-2i\alpha \hat{X}_j^{N-2}\hat{P}_k, \: 2i\alpha^2\hat{X}_j^N\hat{P}_k]} \nonumber \\
&= e^{-2i\alpha \hat{X}_{j}^{N-2} \hat{X}_{k}}e^{2i\alpha^2 \hat{X}_j^N\hat{P}_k}e^{-i\alpha^3\hat{X}_j^{2(N-1)}}.
\end{align}
The two outside operators, $e^{-2i\alpha \hat{X}_{j}^{N-2} \hat{X}_{k}}$ and $e^{-i\alpha^3\hat{X}_j^{2(N-1)}}$ then cancel with the remaining two operators in Eq.~(\ref{twomode}), leaving the desired operator $e^{2i\alpha^2 \hat{X}_j^N\hat{P}_k}$.

\subsection{Proof of Eq.~(\ref{twosquares})}
Here we prove the following exact decomposition formula for the gate $e^{i\alpha \hat{X}^{2}_{j}\hat{X}^{2}_{k}}$:
\begin{align} \label{twosquaresAppendix}
e^{i\alpha \hat{X}^{2}_{j}\hat{X}^{2}_{k}} = \: &e^{i2\hat{P}_{j}\hat{X}_{k}}e^{i\frac{\alpha}{12}\hat{X}_{j}^{4}}e^{-i4\hat{P}_{j}\hat{X}_{k}}e^{i\frac{\alpha}{12}\hat{X}_{j}^{4}}e^{i2\hat{P}_{j}\hat{X}_{k}} e^{-i\frac{\alpha}{6}\hat{X}_{j}^{4}}e^{-i\frac{\alpha}{6}\hat{X}_{k}^{4}}.
\end{align}

We begin by expressing the operator $\hat{X}^{2}_{j}\hat{X}^{2}_{k}$ as a linear combination of polynomials:
\beq
\hat{X}^{2}_{j}\hat{X}^{2}_{k}=\tfrac{1}{12}\left(\hat{X}_j + \hat{X}_k \right)^4+\tfrac{1}{12}\left(\hat{X}_j - \hat{X}_k \right)^4-\tfrac{1}{6}\hat{X}_{j}^{4}-\tfrac{1}{6}\hat{X}_{k}^{4},
\eeq
which leads to the identity
\begin{equation}\label{Eq:AppendixUnConj1}
e^{i\alpha \hat{X}^{2}_{j}\hat{X}^{2}_{k}} = e^{i\frac{\alpha}{12}\left(\hat{X}_j + \hat{X}_k \right)^4}e^{i\frac{\alpha}{12}\left(\hat{X}_j - \hat{X}_k \right)^4}e^{-i\frac{\alpha}{6}\hat{X}_{j}^{4}}e^{-i\frac{\alpha}{6}\hat{X}_{k}^{4}}.
\end{equation}
Finally, from unitary conjugation it holds that 
\begin{equation}
 e^{i2\hat{P}_{j}\hat{X}_{k}}e^{i\frac{\alpha}{12}\hat{X}_{j}^{4}}e^{-i4\hat{P}_{j}\hat{X}_{k}}e^{i\frac{\alpha}{12}\hat{X}_{j}^{4}}e^{i2\hat{P}_{j}\hat{X}_{k}} = e^{i\frac{\alpha}{12}\left(\hat{X}_j + \hat{X}_k \right)^4}e^{i\frac{\alpha}{12}\left(\hat{X}_j - \hat{X}_k \right)^4},
\end{equation}
which gives Eq.~\eqref{twosquaresAppendix} when replaced in Eq.~\eqref{Eq:AppendixUnConj1}.

\subsection{Proof of Eq.~\eqref{SingleGen}}
Here we show the recursive decomposition for single-mode gates $e^{i\alpha\hat{X}_{k}^N}$:

\begin{equation} \label{SingleGenAppendix}
e^{i\alpha\hat{X}_{k}^N} = e^{2i\hat{P}_{j}\hat{X}_{k}^{N/2}} e^{i\alpha\hat{X}_{j}^2} e^{-2i\hat{P}_{j}\hat{X}_{k}^{N/2}} e^{-i\alpha\hat{X}_{j}^2}  e^{-2i\alpha\hat{X}_{j}\hat{X}_{k}^{N/2}}.
\end{equation}
As usual, we begin by expressing the target operator as a linear combination of polynomials:
\begin{equation}
\hat{X}^{N}_{k} = \left(\hat{X}_j + \hat{X}_k^{N/2} \right)^2-\hat{X}_{j}^2-\hat{X}_{j}\hat{X}_{k}^{N/2}
\end{equation}
which leads to the identity
\begin{equation}\label{Eq:IdenAppendix}
e^{i\alpha\hat{X}^{N}_{k}} = e^{i\alpha\left(\hat{X}_j + \hat{X}_k^{N/2} \right)^2}e^{-i\alpha\hat{X}_{j}^2}  e^{-2i\alpha\hat{X}_{j}\hat{X}_{k}^{N/2}}.
\end{equation}
From unitary conjugation it holds that 
\begin{equation}
e^{2i\hat{P}_{j}\hat{X}_{k}^{N/2}} e^{i\alpha\hat{X}_{j}^2} e^{-2i\hat{P}_{j}\hat{X}_{k}^{N/2}} = e^{i\alpha\left(\hat{X}_j + \hat{X}_k^{N/2} \right)^2},
\end{equation}
which leads to Eq.~\eqref{SingleGenAppendix} when replaced in Eq.~\eqref{Eq:IdenAppendix}.

\subsection{Derivation of the linear system of equations}
Here we show that finding coefficients $c_k$ such that the relation
\begin{align} \label{Eq: GeneralDApp}
\prod_{j=1}^N \hat{X}_j^{n_j}= \sum_{k=1}^{N}c_k\sum_{S\in[N]^k}\left(\sum_{i=1}^k\hat{X}_{S_i}^{n_{S_i}}\right)^N
\end{align}
holds is equivalent to solving the linear system $A\vec{c}=0$, with $\vec{c}=(c_N,c_{N-1},\ldots, c_1)$ and $A$ given by
\beq\label{Eq:PascalA_Appendix}
A=\begin{pmatrix}
    1 & 1 & 0 & 0 & \dots  & 0 \\
    1 & 2 & 1 & 0 &\dots  & 0 \\
    1 & 3 & 3 & 1 &\dots  & 0 \\
    \vdots & \vdots & \vdots & \vdots & \ddots & \vdots \\
    \binom{N-1}{0} & \binom{N-1}{1} & \binom{N-1}{2} & \binom{N-1}{3} & \dots  & \binom{N-1}{N-1}
\end{pmatrix}.
\eeq
Define $Y_j:=X_j^{n_j}$ such that Eq.~\eqref{Eq: GeneralDApp} becomes
\begin{align} \label{Eq: GeneralDAppYs}
\prod_{j=1}^N \hat{Y}_j= \sum_{k=1}^{N}c_k\sum_{S\in[N]^k}\left(\sum_{i=1}^k\hat{Y}_{S_i}\right)^N.
\end{align}
The expansion of the right-hand side of Eq.~\eqref{Eq: GeneralDAppYs} produces monomials of the form $\prod_{j=1}^N Y_j^{m_j}$, where $\sum_{j=1}^N m_j=N$ and the exponents $m_j\geq 0$ are non-negative integers. Each monomial can be uniquely labelled by the vector of exponents $\vec{m}=(m_1,m_2,\ldots,m_N)$. For each polynomial $\left(\sum_{i=1}^k\hat{Y}_{S_i}\right)^N$, it follows from the multinomial theorem that the coefficient in front of the monomial $\prod_{j=1}^N Y_j^{m_j}$ is always the same, namely the multinomial coefficient $\binom{N}{m_1,m_2,\ldots,m_N}$. For example, the polynomials $\left(Y_1 + Y_2 \right)^3$ and  $\left(Y_1 + Y_2 + Y_{3}\right)^3$ both produce a monomial $Y_1Y_2^2$ with coefficient $\binom{3}{1,1,0}=3$. Therefore, the overall coefficient $\chi_{\vec{m}}$ accompanying the monomial $\prod_{j=1}^N Y_j^{m_j}$ is given by
\beq
\chi_{\vec{m}}=\binom{N}{m_1,m_2,\ldots,m_N}\sum_{k=1}^Nc_kf_k(\vec{m}),
\eeq
where $f_k(\vec{m})$ is the number of times the monomial $\prod_{j=1}^N Y_j^{m_j}$ appears in polynomials of $k$ variables. The goal is to find coefficients $c_k$ such that $\chi_{\vec{m}}=0$ for all $\vec{m}$ except the target case $\vec{m}=(1,1,\ldots,1)$. This leads to the equations
\beq
\sum_{k=1}^Nc_kf_k(\vec{m})=0.
\eeq
Suppose that $\vec{m}$ has $\ell$ non-zero elements, i.e., the monomial $\prod_{j=1}^N Y_j^{m_j}$ contains $\ell$ variables. The quantity $f_k(\vec{m})$ is then equal to the number of ways in which the remaining $k-\ell$ variables can be selected from the remaining $N-\ell$ ones, which is simply $\binom{N-\ell}{k-\ell}$. Thus, $f_k(\vec{m})=\binom{N-\ell}{k-\ell}$, which depends only on $\ell$, leading to $N-1$ equations for each $\ell=1,2,\ldots, N-1$:
\beq
\sum_{k=1}^Nc_k\binom{N-\ell}{k-\ell}=:A\vec{c}=0,
\eeq
with $A$ as in Eq.~\eqref{Eq:PascalA_Appendix}.

\bibliographystyle{apsrev}
\bibliography{DP}

\end{document}